\documentclass[10pt,a4paper]{article}

\usepackage[T1]{fontenc}
\usepackage[english,brazil]{babel}
\usepackage{lmodern}
\usepackage[top=20mm,bottom=20mm,left=20mm,right=20mm,heightrounded]{geometry}
\usepackage{amsmath,amssymb,amsfonts}
\usepackage{amsthm}
\usepackage{booktabs}
\usepackage{mathtools}
\usepackage{caption}
\usepackage{multicol}
\usepackage{microtype}
\usepackage{hyperref}
\usepackage{doi}
\usepackage{url}

\hypersetup{hidelinks}

\newtheorem{proposition}{Proposition}
\newtheorem{corollary}{Corollary}
\newtheorem{remark}{Remark}

\begin{document}

\title{\vspace{-1.5em}
\textbf{A Clarifying Note on the Position--Momentum Correspondence:\
Pontryagin Duality, Fourier Transport, and Physical Normalization}}
	\author{
	Samuel B. Soltau\thanks{Instituição 1, e-mail: autor1@exemplo.br}\\
	\small Department of Physics, Federal University of Alfenas -- Minas Gerais, Brazil
}
\date{September, 2026}
\maketitle

\begin{abstract}
\small
\noindent
The position--momentum correspondence combines several mathematically distinct
identifications that are often conflated. For the additive group $(\mathbb R,+)$,
every continuous character has the form $x\mapsto \mathrm e^{ikx}$. After choosing
a coordinate $p=\alpha k$ on the dual group, one obtains
$\mathrm e^{ipx/\alpha}$, so Pontryagin duality alone does not identify a
particular dual coordinate with physical momentum. For the corresponding unitary
Fourier transform, multiplication in position space is transported to convolution
in momentum space, while the diagonal distribution is transported to the specific
composition with the addition map $(p,q)\mapsto p+q$; it is not transported to
pointwise multiplication in momentum space. The family
$\hat P_\alpha=-i\alpha\,d/dx$ has commutator
$[\hat X,\hat P_\alpha]=i\alpha I$ on $\mathcal S(\mathbb R)$, and the standard
quantum-mechanical normalization $[\hat X,\hat P]=i\hbar I$ therefore selects
$\alpha=\hbar$ within this family. Equivalently, the standard normalization of
the translation generator gives $\hat P=-i\hbar\,d/dx$. The resulting characters
$\mathrm e^{ipx/\hbar}$ have spatial period $h/|p|$ for $p\neq0$. All
distributional statements are formulated in the Schwartz rigging and no product or
pullback of arbitrary tempered distributions is used.
\end{abstract}

\medskip
\noindent\textbf{Keywords:}
Pontryagin duality,
Fourier transform,
Schwartz space,
tempered distributions,
rigged Hilbert spaces,
Weyl relations,
position and momentum.

\vspace{0.7em}

\begin{multicols}{2}
\raggedcolumns

\section{Introduction}
\label{sec:introduction}

The standard position--momentum correspondence can be described simultaneously
in harmonic analysis, distribution theory, and quantum mechanics. In the position
representation, momentum is represented by a differential operator, whereas in
the momentum representation it is represented by multiplication. The Fourier
transform relates these representations.

The familiar relations
\begin{equation}
 p=\hbar k,
 \qquad
 \lambda=\frac{h}{|p|},
 \qquad
 h=2\pi\hbar
 \label{eq:intro_debroglie}
\end{equation}
are standard physical relations and are not claimed here as new physical results
\cite{deBroglie1923,deBroglie1925}.

The purpose of this note is narrower and structural: to separate the logical
ingredients entering these relations. Four levels should be distinguished. First,
Pontryagin duality identifies the character group of the translation group.
Second, after choosing Haar measures and a Fourier convention, Fourier analysis
realizes the corresponding pairing. Third, the unitary representation of spatial
translations has a self-adjoint generator, and Fourier transformation gives its
spectral representation. Fourth, a physical convention fixes which normalization
of that generator is to be called momentum.

No claim is made that Planck's constant can be derived from the abstract dual group.
The point is instead that the mathematical structure contains a normalization
freedom which is subsequently calibrated by the physical definition of momentum.
This is also the reason why the formula $p=\hbar k$ should not be attributed to
Pontryagin duality alone.

The continuous-spectrum aspects require some care. Plane waves and Dirac
measures/distributions are not vectors in $L^2(\mathbb R)$, and multiplication of
arbitrary tempered distributions is not defined in general. We therefore work in
the standard rigged Hilbert space
$\mathcal S(\mathbb R)\subset L^2(\mathbb R)\subset\mathcal S'(\mathbb R)$
and use only explicitly defined operations.

\section{Characters of the Translation Group}
\label{sec:duality}

Let
\[
G=(\mathbb R,+).
\]
Its Pontryagin dual is
\begin{equation}
 \widehat G=
 \operatorname{Hom}_{\mathrm{cont}}(G,U(1)).
 \label{eq:dual_group}
\end{equation}
Every continuous character is of the form
\begin{equation}
 \chi_k(x)=\mathrm e^{ikx},
 \qquad k\in\mathbb R,
 \label{eq:character_k}
\end{equation}
so, after fixing this parameterization, $\widehat G$ is identified with another
copy of $(\mathbb R,+)$ \cite{Folland1995,Folland1999}.

The quantity $k$ is the wave-number coordinate when $x$ has dimensions of length.
Introduce a further coordinate $p$ by
\begin{equation}
 p=\alpha k,
 \qquad \alpha>0.
 \label{eq:p_alpha_k}
\end{equation}
The same character is then written as
\begin{equation}
 \chi^{(\alpha)}_p(x)=\mathrm e^{ipx/\alpha}.
 \label{eq:character}
\end{equation}
Thus the abstract dual group, by itself, does not single out a distinguished
physical unit or scale for a coordinate named $p$.

Indeed, for $c>0$, the simultaneous rescaling
\[
p'=cp,
\qquad
\alpha'=c\alpha
\]
preserves $p/\alpha$ and hence leaves the character invariant:
\begin{equation}
 \chi^{(\alpha')}_{p'}(x)=\chi^{(\alpha)}_p(x).
 \label{eq:scaling_invariance}
\end{equation}

\begin{proposition}[Coordinate-scale freedom]
\label{prop:scaling}
Within the parameterization $p=\alpha k$, the normalization of the chosen
coordinate on $\widehat G$ is not fixed by the abstract character group. It is
invariant under
\[
p\mapsto cp,
\qquad
\alpha\mapsto c\alpha,
\qquad c>0.
\]
\end{proposition}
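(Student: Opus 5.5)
The plan is to prove the two assertions of Proposition~\ref{prop:scaling} in turn: first the invariance, then the non-uniqueness of the normalization. Both rest only on the explicit description \eqref{eq:character_k} of the characters of $G=(\mathbb R,+)$ and on the reparameterization \eqref{eq:p_alpha_k}.

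For invariance, I fix $c>0$ and set $p'=cp$, $\alpha'=c\alpha$. Then $p'/\alpha'=p/\alpha=k$, so by \eqref{eq:character} we get $\chi^{(\alpha')}_{p'}(x)=\mathrm e^{ip'x/\alpha'}=\mathrm e^{ikx}=\chi_k(x)$ for every $x$. This is exactly \eqref{eq:scaling_invariance}. Consequently the rescaled pair labels the same element of $\widehat G$, and the maps $k\mapsto \chi_k$, $p\mapsto\chi^{(\alpha)}_p$ and $p'\mapsto\chi^{(\alpha')}_{p'}$ all have the same image. I will also note that the group structure is respected: $p\mapsto cp$ is a topological group automorphism of $(\mathbb R,+)$. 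Hence the identification $\widehat G\cong\mathbb R$ via $p'$ is again an isomorphism of topological groups, because it is the composite of the $p$-identification with an automorphism.

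For non-uniqueness, I will argue that any intrinsic datum of $\widehat G$ (its abstract group structure, its topology, and the pairing $\langle x,\chi\rangle=\chi(x)$ with $G$) is preserved under the passage from $(p,\alpha)$ to $(cp,c\alpha)$. The reason is that the pairing depends only on the character itself, and the character is unchanged by the previous step. So no property formulated purely in terms of $\widehat G$ and its pairing with $G$ can distinguish $\alpha$ from $c\alpha$. To make this sharp, I will point out that the automorphisms of $(\mathbb R,+)$ as a topological group are exactly the maps $k\mapsto ck$ with $c\neq0$, since continuous additive maps are linear. Restricting to $c>0$ keeps the orientation of the chosen coordinate. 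Therefore the admissible coordinates $p$ on $\widehat G$ form a single orbit under positive scalings, and no element of that orbit is singled out.

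The main obstacle is interpretive rather than computational. ``Not fixed by the abstract character group'' has to be turned into a precise statement, namely that the set of coordinates compatible with the structure is a nontrivial orbit under the scaling group. The key input there is the automorphism fact about continuous additive maps $\mathbb R\to\mathbb R$, which I take as standard.
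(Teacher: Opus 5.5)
Your invariance step ($p'/\alpha'=p/\alpha=k$, hence $\chi^{(c\alpha)}_{cp}=\chi^{(\alpha)}_p$) is exactly the paper's proof, which consists of this observation alone. The paper reads ``not fixed by the abstract character group'' directly off that invariance. Your second half adds something the paper does not state. The topological automorphisms of $(\mathbb R,+)$ are precisely $k\mapsto ck$ with $c\neq0$, so every topological-group coordinate on $\widehat G$ is, up to sign, of the form $p=\alpha k$. Restricting to this family therefore loses nothing. Since $\operatorname{Aut}(\widehat G)$ acts transitively on the family, no automorphism-invariant criterion on $\widehat G$ can select a particular $\alpha$. This is a clean formalization of ``not fixed'', valid when $\widehat G$ is regarded as an abstract topological group.

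One sentence should be dropped or qualified, however. You claim that no property formulated via $\widehat G$ and its pairing with $G$ can distinguish $\alpha$ from $c\alpha$. That is false when $G$ keeps its coordinate $x$, as it does here. The pairing singles out $k$ itself, e.g.\ via $k(\chi)=-i\chi'(0)$. Equivalently, $\alpha=1$ is the unique value for which $\chi(x)=\mathrm e^{ip(\chi)x}$ for all $x$ and all $\chi$. In fact, adding the pairing with a fixed $G$ destroys the automorphism freedom you rely on: the only automorphism $\psi$ of $\widehat G$ with $\psi(\chi)(x)=\chi(x)$ for all $x$ is the identity. Your remark that intrinsic data are ``preserved under the passage'' is also vacuous, because relabeling does not act on $\widehat G$ at all. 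The freedom survives in only two ways. One is to also forget the coordinate on $G$, since the simultaneous rescaling $x\mapsto x/c$, $k\mapsto ck$ preserves the pairing. The other is the dimensional argument the paper has in mind: $p$ is meant to carry momentum units, so $\alpha$ must carry units of action, and the abstract dual group supplies no such constant. That is why the paper speaks of no distinguished ``physical unit or scale''.
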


\begin{proof}
The character depends only on the dimensionless combination $px/\alpha$, which is
invariant under the stated simultaneous rescaling.
\end{proof}

This is a statement about coordinate choice, not a defect of Pontryagin duality.
Once the pairing, Haar measures, and coordinate convention are fixed, the Fourier
analysis is unambiguous.

\section{Fourier Analysis and Operator Intertwining}
\label{sec:fourier}

\subsection{Schwartz setting and Fourier normalization}

Let $\mathcal S(\mathbb R)$ denote the Schwartz space and $\mathcal S'(\mathbb R)$
its continuous dual. We use the rigged Hilbert-space chain
\begin{equation}
 \mathcal S(\mathbb R)
 \subset
 L^2(\mathbb R,dx)
 \subset
 \mathcal S'(\mathbb R).
 \label{eq:rhs}
\end{equation}
The Fourier transform preserves $\mathcal S(\mathbb R)$ and extends by continuity
to a unitary operator on $L^2(\mathbb R)$ \cite{Folland1999,Hormander1990}.

For $\alpha>0$, define
\begin{equation}
 (\mathcal F_\alpha f)(p)
 =
 \frac{1}{\sqrt{2\pi\alpha}}
 \int_{\mathbb R}
 \mathrm e^{-ipx/\alpha}f(x)\,dx.
 \label{eq:Falpha}
\end{equation}
The inverse is
\begin{equation}
 f(x)
 =
 \frac{1}{\sqrt{2\pi\alpha}}
 \int_{\mathbb R}
 \mathrm e^{ipx/\alpha}(\mathcal F_\alpha f)(p)\,dp.
 \label{eq:Falpha_inverse}
\end{equation}
With Lebesgue measure on both axes, $\mathcal F_\alpha$ is unitary from
$L^2(\mathbb R,dx)$ to $L^2(\mathbb R,dp)$. The prefactor in
\eqref{eq:Falpha} determines the constants in the convolution identities below.

\subsection{Position, momentum, and translation generators}

On $\mathcal S(\mathbb R)$ define
\begin{equation}
 (\hat Xf)(x)=xf(x),
 \qquad
 (\hat P_\alpha f)(x)=-i\alpha f'(x).
 \label{eq:XPalpha}
\end{equation}
A direct calculation gives
\begin{equation}
 [\hat X,\hat P_\alpha]f=i\alpha f,
 \qquad f\in\mathcal S(\mathbb R).
 \label{eq:CCRalpha}
\end{equation}
Integration by parts yields
\begin{align}
 \mathcal F_\alpha\hat P_\alpha\mathcal F_\alpha^{-1}
 &=p,
 \label{eq:P_intertwine}
 \\
 \mathcal F_\alpha\hat X\mathcal F_\alpha^{-1}
 &=i\alpha\frac{d}{dp},
 \label{eq:X_intertwine}
\end{align}
where the first right-hand side denotes multiplication by $p$.

The connection with translations can be stated without identifying the scale yet.
Let
\begin{equation}
 (U(a)f)(x)=f(x-a),
 \qquad a\in\mathbb R.
 \label{eq:translation}
\end{equation}
Then $U(a)$ is a strongly continuous unitary representation of $(\mathbb R,+)$.
By Stone's theorem there is a unique self-adjoint generator $A$ such that
$U(a)=\exp(-iaA)$. With
\[
A_\alpha=\frac{\overline{\hat P_\alpha}}{\alpha},
\]
we have
\begin{equation}
 U(a)=\exp\!\left(-\frac{ia}{\alpha}\overline{\hat P_\alpha}\right),
 \label{eq:U_Palpha}
\end{equation}
and, in the Fourier representation,
\begin{equation}
 (\mathcal F_\alpha U(a)\mathcal F_\alpha^{-1}u)(p)
 =\mathrm e^{-iap/\alpha}u(p).
 \label{eq:translation_spectral}
\end{equation}
Thus the Fourier variable $p$ is the spectral coordinate of the normalized
translation generator $A_\alpha$; whether that coordinate is called physical
momentum is an additional normalization convention \cite{Hall2013}.

\subsection{Position multiplication and convolution}

For $f,g\in\mathcal S(\mathbb R)$ let
\begin{equation}
 m_X(f,g)=fg.
 \label{eq:mX}
\end{equation}
Then
\begin{equation}
 \mathcal F_\alpha(fg)
 =
 \sqrt{2\pi\alpha}\,
 (\mathcal F_\alpha f)*(\mathcal F_\alpha g),
 \label{eq:fourier_product}
\end{equation}
where
\[
(u*v)(p)=\int_{\mathbb R}u(q)v(p-q)\,dq.
\]
Consequently, if
\begin{equation}
 \mu_\alpha(u,v)
 =\sqrt{2\pi\alpha}\,(u*v),
 \label{eq:mu_alpha}
\end{equation}
then
\begin{equation}
 \mathcal F_\alpha\circ m_X
 =
 \mu_\alpha\circ(\mathcal F_\alpha\otimes\mathcal F_\alpha).
 \label{eq:m_fourier}
\end{equation}
In particular,
\begin{equation}
 m_X\longmapsto\mu_\alpha,
 \qquad
 m_X\not\longmapsto m_P,
 \label{eq:not_mP}
\end{equation}
where $m_P(u,v)=uv$ denotes pointwise multiplication in the $p$ coordinate.

\begin{remark}
The factor in \eqref{eq:fourier_product} depends on the Fourier normalization.
The structural statement does not: Fourier transformation converts pointwise
multiplication into convolution rather than into pointwise multiplication in the
dual variable.
\end{remark}

\subsection{The diagonal distribution}

The multiplication map has the associated diagonal restriction
\begin{equation}
 \widetilde m_X:
 \mathcal S(\mathbb R^2)\to\mathcal S(\mathbb R),
 \qquad
 (\widetilde m_XF)(x)=F(x,x).
 \label{eq:linearized_mX}
\end{equation}
Its transpose is defined by weak pairing. In particular, for $f\in\mathcal S(\mathbb R)$,
\begin{equation}
 (\widetilde m_X^{\,t}f)(x_1,x_2)
 =f(x_1)\delta(x_1-x_2).
 \label{eq:transpose_mX}
\end{equation}
We denote this tempered distribution by $\delta_Xf$.
No Hilbert-space adjoint of the bilinear map $m_X$ is being asserted.

Define the linear addition map
\[
\Sigma:\mathbb R^2\to\mathbb R,
\qquad
\Sigma(p,q)=p+q.
\]
For $\varphi\in\mathcal S(\mathbb R)$, the function
$\varphi\circ\Sigma$ is smooth and bounded, with derivatives bounded as well;
hence it defines a tempered distribution on $\mathbb R^2$ by integration against
Schwartz test functions.

A direct calculation gives
\begin{equation}
 (\mathcal F_\alpha\otimes\mathcal F_\alpha)(\delta_X f)
 =
 \frac{1}{\sqrt{2\pi\alpha}}
 \bigl(\mathcal F_\alpha f\bigr)\circ\Sigma.
 \label{eq:delta_fourier}
\end{equation}
Indeed,
\begin{align}
&\bigl[(\mathcal F_\alpha\otimes\mathcal F_\alpha)(\delta_X f)\bigr](p,q)
\nonumber\\
&\quad=
\frac{1}{2\pi\alpha}
\int_{\mathbb R}\int_{\mathbb R}
\mathrm e^{-i(px+qy)/\alpha}f(x)\delta(x-y)\,dx\,dy
\nonumber\\
&\quad=
\frac{1}{2\pi\alpha}
\int_{\mathbb R}
\mathrm e^{-i(p+q)x/\alpha}f(x)\,dx
\nonumber\\
&\quad=
\frac{1}{\sqrt{2\pi\alpha}}
(\mathcal F_\alpha f)(p+q).
\label{eq:delta_fourier_calculation}
\end{align}
No pullback theorem for arbitrary distributions is invoked here: the composition
$\varphi\circ\Sigma$ is formed for the specific Schwartz function
$\varphi=\mathcal F_\alpha f$.

The appearance of $p+q$ is the distributional counterpart of convolution. It is
therefore structurally different from pointwise multiplication in the $p$ variable.

\subsection{Constants and evaluation at the origin}

The constant distribution has Fourier transform
\begin{equation}
 \mathcal F_\alpha[1]
 =\sqrt{2\pi\alpha}\,\delta_0,
 \label{eq:fourier_constant}
\end{equation}
while
\begin{equation}
 \int_{\mathbb R}f(x)\,dx
 =\sqrt{2\pi\alpha}\,(\mathcal F_\alpha f)(0).
 \label{eq:epsilon_fourier}
\end{equation}
Likewise,
\begin{equation}
 \mathcal F_\alpha^{-1}[1]
 =\sqrt{2\pi\alpha}\,\delta_0.
 \label{eq:inverse_constant}
\end{equation}
Thus integration in position space is related by Fourier transformation to
evaluation at the origin in the dual coordinate, and the constant distribution is
transformed into a Dirac distribution rather than another constant distribution.

\begin{remark}
The identities in this section are the distributional statements needed here.
They do not establish a global equivalence of Frobenius structures in a category of
arbitrary tempered distributions. Such a claim would require separate hypotheses
on multiplication, composition, domains, and tensorial contractions.
\end{remark}

\section{Weyl Relations and Physical Normalization}
\label{sec:physical}

Let $\overline{\hat X}$ and $\overline{\hat P_\alpha}$ denote the self-adjoint
closures of the operators in \eqref{eq:XPalpha}; on $\mathcal S(\mathbb R)$ they
are essentially self-adjoint. Define
\begin{equation}
\begin{aligned}
T_\alpha(a)&=
\exp\!\left(-\frac{i}{\alpha}a\overline{\hat P_\alpha}\right),\\
M_\alpha(b)&=
\exp\!\left(\frac{i}{\alpha}b\overline{\hat X}\right),
\end{aligned}
\qquad a,b\in\mathbb R.
\label{eq:weylops}
\end{equation}
Then
\begin{align*}
(T_\alpha(a)f)(x)&=f(x-a),\\
(M_\alpha(b)f)(x)&=\mathrm e^{ibx/\alpha}f(x),
\end{align*}
and hence
\begin{equation}
 T_\alpha(a)M_\alpha(b)
 =
 \mathrm e^{-iab/\alpha}M_\alpha(b)T_\alpha(a).
 \label{eq:weylalpha}
\end{equation}
This is the Weyl relation for the chosen normalization. It is a
representation-theoretic statement and is not identified here with the categorical
notion of strong complementarity \cite{CoeckeDuncanKissingerWang2012}.

\subsection{Physical normalization}

The physical momentum observable is conventionally normalized by the canonical
commutation relation
\begin{equation}
 [\hat X,\hat P]=i\hbar I
 \qquad\text{on }\mathcal S(\mathbb R).
 \label{eq:physical_ccr}
\end{equation}
Comparing with \eqref{eq:CCRalpha} gives, within the family
$\hat P_\alpha=-i\alpha\,d/dx$,
\begin{equation}
 \alpha=\hbar.
 \label{eq:alpha_hbar}
\end{equation}

\begin{proposition}[Physical normalization]
\label{prop:physical}
Fix the position coordinate $x$ and the translation representation
$(U(a)f)(x)=f(x-a)$. Within the family
\[
\hat P_\alpha=-i\alpha\frac{d}{dx},
\]
the normalization
\[
[\hat X,\hat P]=i\hbar I
\]
uniquely selects $\alpha=\hbar$.
\end{proposition}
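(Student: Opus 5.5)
The plan is to reduce the statement to the commutator identity \eqref{eq:CCRalpha} and then use that a nonzero operator identity determines its scalar coefficient. First I would verify \eqref{eq:CCRalpha} directly on $\mathcal S(\mathbb R)$. For $f\in\mathcal S(\mathbb R)$, both $\hat X f$ and $\hat P_\alpha f$ again lie in $\mathcal S(\mathbb R)$, so the commutator is defined on this common invariant domain. The product rule gives
\[
[\hat X,\hat P_\alpha]f=-i\alpha\bigl(xf'(x)-(xf)'(x)\bigr)=i\alpha f(x).
\]
So $[\hat X,\hat P_\alpha]=i\alpha I$ on $\mathcal S(\mathbb R)$ for every $\alpha>0$.

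Next comes existence. Setting $\alpha=\hbar$ yields $[\hat X,\hat P_\hbar]=i\hbar I$, so the normalization \eqref{eq:physical_ccr} is attained within the family.

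For uniqueness, suppose $[\hat X,\hat P_\alpha]=i\hbar I$ on $\mathcal S(\mathbb R)$. Then $i(\alpha-\hbar)f=0$ for all $f\in\mathcal S(\mathbb R)$. Choosing a single nonzero Schwartz function, for instance the Gaussian $f(x)=\mathrm e^{-x^2}$, and evaluating at $x=0$ forces $\alpha=\hbar$. This is the only place where nontriviality of the domain is used. Equivalently, the map $\alpha\mapsto\hat P_\alpha$ is injective, since $\hat P_\alpha f=\alpha\hat P_1 f$ and $\hat P_1 f\neq0$ for that $f$.

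Finally I would comment on the role of the fixed data $x$ and $U(a)$. With these fixed, the family $\hat P_\alpha$ is exactly the set of positive rescalings of the Stone generator $A$ of $U$. By \eqref{eq:U_Palpha}, $\overline{\hat P_\alpha}=\alpha A$, so the CCR singles out $\hbar A$. This is the precise sense of "uniquely selects": the normalization removes the freedom of Proposition~\ref{prop:scaling}. It does not constrain $\hbar$ itself, which enters as a physical input.

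I expect no real obstacle here. The only point needing care is domain bookkeeping. The identity should be stated on $\mathcal S(\mathbb R)$, not for the closures, because products and commutators of unbounded self-adjoint operators need a common invariant domain. If one prefers a domain-free formulation, I would instead compare the Weyl relations \eqref{eq:weylalpha} with their $\hbar$-version. The phase $\mathrm e^{-iab/\alpha}$ equals $\mathrm e^{-iab/\hbar}$ for all $a,b$ only if $\alpha=\hbar$, which can be seen by differentiating in $ab$ at $0$. This gives the same conclusion at the level of bounded operators.
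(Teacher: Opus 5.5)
Your argument is correct and is essentially the paper's proof: compute $[\hat X,\hat P_\alpha]=i\alpha I$ on $\mathcal S(\mathbb R)$ and compare with $i\hbar I$. Your Gaussian test function just makes explicit the nontriviality that the paper's one-line ``comparison'' step leaves implicit.

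One thing to tidy is the optional Weyl aside. As written it does not work, because \eqref{eq:weylalpha} holds for \emph{every} $\alpha$: both of its exponents carry the factor $1/\alpha$. Matching its phase $\mathrm e^{-iab/\alpha}$ against $\mathrm e^{-iab/\hbar}$ therefore compares relations for different operator pairs. Instead, fix the scaling independently of $\alpha$, e.g.\ $\mathrm e^{-ia\overline{\hat P_\alpha}/\hbar}\,\mathrm e^{ib\overline{\hat X}/\hbar}=\mathrm e^{-i\alpha ab/\hbar^2}\,\mathrm e^{ib\overline{\hat X}/\hbar}\,\mathrm e^{-ia\overline{\hat P_\alpha}/\hbar}$. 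Requiring this phase to equal $\mathrm e^{-iab/\hbar}$ for all $a,b$ then genuinely forces $\alpha=\hbar$.
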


\begin{proof}
By \eqref{eq:CCRalpha},
\[
[\hat X,\hat P_\alpha]f=i\alpha f
\qquad (f\in\mathcal S(\mathbb R)).
\]
Comparison with \eqref{eq:physical_ccr} gives $\alpha=\hbar$.
\end{proof}

There is an equivalent generator statement. With the physical normalization,
Stone's theorem gives
\begin{equation}
 U(a)=\exp\!\left(-\frac{ia}{\hbar}\overline{\hat P}\right),
 \qquad
 \overline{\hat P}=-i\hbar\frac{d}{dx}
\end{equation}
with its standard self-adjoint realization on $L^2(\mathbb R)$. In the momentum
representation, $\hat P$ is multiplication by $p$. Thus $p$ is the spectral
coordinate of the physically normalized translation generator.

This is a physical calibration of the chosen coordinate on the dual group, not a
derivation of Planck's constant from Pontryagin duality. It also makes explicit
what is fixed by Stone's theorem and what is fixed by physical convention.

With $\alpha=\hbar$, the character becomes
\begin{equation}
 \chi_p(x)=\mathrm e^{ipx/\hbar}.
 \label{eq:physical_character}
\end{equation}
Comparison with $\chi_k(x)=\mathrm e^{ikx}$ gives
\begin{equation}
 k=\frac{p}{\hbar},
 \qquad
 p=\hbar k.
 \label{eq:p_hbar_k}
\end{equation}

\section{de Broglie Wavelength}
\label{sec:debroglie}

For $p\neq0$, a positive period $\lambda$ of
$\mathrm e^{ipx/\hbar}$ satisfies
\[
\mathrm e^{ip(x+\lambda)/\hbar}=\mathrm e^{ipx/\hbar},
\]
so
\[
\frac{p\lambda}{\hbar}=2\pi n
\]
for some nonzero integer $n$. The least positive period is therefore
\begin{equation}
 \lambda=\frac{2\pi\hbar}{|p|}
 =\frac{h}{|p|},
 \qquad p\neq0,
 \label{eq:debroglie}
\end{equation}
where $h=2\pi\hbar$. For $p=0$ the character is constant, and the corresponding
wavelength is naturally interpreted as $\lambda=\infty$.

\begin{corollary}[de Broglie wavelength]
\label{cor:debroglie}
After the physical normalization $\alpha=\hbar$, the generalized momentum
character
\[
\chi_p(x)=\mathrm e^{ipx/\hbar}
\]
has spatial period $h/|p|$ for $p\neq0$.
\end{corollary}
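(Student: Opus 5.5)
The plan is to reduce the corollary to an elementary statement about the periods of a single complex exponential. First I would invoke Proposition~\ref{prop:physical} to fix $\alpha=\hbar$. Then \eqref{eq:character} specializes to $\chi_p(x)=\mathrm e^{ipx/\hbar}$, as recorded in \eqref{eq:physical_character}. From this point on, no operator-theoretic or distributional input is needed. The character is a smooth bounded function of $x$ for each fixed $p\neq0$, so its period can be examined pointwise.

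Next I would characterize the set of periods. A real number $\lambda$ is a period exactly when $\chi_p(x+\lambda)=\chi_p(x)$ for all $x$. Dividing by the nonvanishing factor $\mathrm e^{ipx/\hbar}$ turns this into the single condition $\mathrm e^{ip\lambda/\hbar}=1$. I would then use that $\ker(\theta\mapsto \mathrm e^{i\theta})=2\pi\mathbb Z$. This gives $p\lambda/\hbar\in 2\pi\mathbb Z$, so the set of periods is exactly the lattice $(2\pi\hbar/p)\mathbb Z$.

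For $p\neq0$, the least positive element of that lattice is $2\pi\hbar/|p|$. Substituting $h=2\pi\hbar$ gives $h/|p|$, in agreement with \eqref{eq:debroglie}.

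The only step needing care is the word \emph{period}. I would make explicit that "spatial period" means the least positive period. I would also note that every other positive period is an integer multiple $n\,h/|p|$ with $n\geq1$. Finally I would remark that $p=0$ is excluded, because the constant character has every real number as a period and therefore no least positive one. There is no real obstacle; the result follows directly once $\alpha=\hbar$ is fixed.
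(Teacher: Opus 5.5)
Your proposal is correct and matches the paper's argument. The text just before the corollary derives $p\lambda/\hbar\in 2\pi\mathbb Z$ from $\mathrm e^{ip(x+\lambda)/\hbar}=\mathrm e^{ipx/\hbar}$ and takes the least positive solution, while the formal proof substitutes $k=p/\hbar$ into the basic period $2\pi/|k|$; your description of the full period lattice $(2\pi\hbar/p)\mathbb Z$ is a slightly more explicit version of the same computation.
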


\begin{proof}
Substitution of $k=p/\hbar$ into the basic period $2\pi/|k|$ gives
\eqref{eq:debroglie}.
\end{proof}

The statement concerns an ideal character, i.e. a generalized momentum
eigenfunction. A generic $L^2$ wave packet need not possess a single wavelength;
its Fourier transform generally has support over a range of momenta.

\section{Logical Status and Relation to Complementarity}
\label{sec:scope}

The argument above is deliberately modest. It is a clarification of logical
dependence, not a new derivation of the de Broglie relation.

The dependence may be summarized as
\[
\begin{array}{c}
\text{translation group }(\mathbb R,+)\\
\downarrow\\
\text{characters }\mathrm e^{ikx}\\
\downarrow\\
\text{chosen dual coordinate }p=\alpha k\\
\downarrow\\
\text{physical normalization }\alpha=\hbar\\
\downarrow\\
 p=\hbar k\\
\downarrow\\
 \lambda=h/|p|.
\end{array}
\]
The first three stages are mathematical conventions and constructions; the fourth
is the physical calibration that identifies the chosen dual coordinate with the
standard momentum observable. In particular, Planck's constant is not derived
from the topology of the dual group.

The distributional assertions are formulated through the Schwartz rigging because
Dirac distributions and plane waves are generally not $L^2$ vectors. The operations
used explicitly are the Fourier transform on $\mathcal S$ and $\mathcal S'$, the
transpose of a continuous Schwartz-space map, pointwise multiplication of Schwartz
functions, convolution of Schwartz functions, and the specific composition with
the linear addition map appearing in \eqref{eq:delta_fourier}. No global
multiplication of arbitrary tempered distributions is assumed.

The discussion of Weyl relations is likewise deliberately separated from stronger
notions of complementarity. Strong complementarity in categorical quantum mechanics
is an algebraic condition on interacting observable structures; the Weyl relation
is a relation among a representation of translations and its conjugate modulation
group \cite{CoeckeDuncanKissingerWang2012}. Operational complementarity of quantum
observables is a distinct notion with its own formulations and hypotheses
\cite{KiukasLahtiPellonpaaYlinen2019}.

The main structural claim is therefore limited but precise: the momentum coordinate
used in the usual Fourier representation is a chosen coordinate on the character
group of translations; Fourier transport does not turn position multiplication into
pointwise momentum multiplication; and the physical normalization of the translation
generator fixes the scale so that the standard momentum variable satisfies
$p=\hbar k$.

\section{Conclusion}
\label{sec:conclusion}

For $(\mathbb R,+)$, Pontryagin duality gives the continuous characters
$\mathrm e^{ikx}$. Writing $p=\alpha k$ gives $\mathrm e^{ipx/\alpha}$, with the
scale of the coordinate $p$ not fixed by the abstract character group.

With the unitary Fourier transform \eqref{eq:Falpha}, position multiplication
satisfies
\[
\mathcal F_\alpha(fg)
=
\sqrt{2\pi\alpha}\,(\mathcal F_\alpha f)*(\mathcal F_\alpha g),
\]
so it is not carried into pointwise multiplication in the momentum coordinate.
The diagonal distribution is instead carried to
\[
\frac{1}{\sqrt{2\pi\alpha}}
(\mathcal F_\alpha f)\circ\Sigma,
\qquad
\Sigma(p,q)=p+q.
\]

For the translation representation $U(a)f(x)=f(x-a)$, Stone's theorem supplies
its self-adjoint generator. Imposing the physical normalization
$[\hat X,\hat P]=i\hbar I$ identifies that generator as
$\hat P=-i\hbar\,d/dx$ on the standard core, and the corresponding spectral
coordinate satisfies
\[
p=\hbar k.
\]
The periodicity of the character then gives
\[
\lambda=\frac{h}{|p|}
\]
for $p\neq0$.

Thus the de Broglie relation is not a consequence of Pontryagin duality alone.
The abstract duality supplies the characters; Fourier analysis realizes the pairing
and spectral representation; the physical normalization identifies the dual
coordinate with momentum; and the wavelength follows from the periodicity of the
resulting character. This is the limited claim of the note.

\end{multicols}

\end{document}